\documentclass[english]{article}
\usepackage{mathpazo}
\usepackage[T1]{fontenc}
\usepackage[latin9]{inputenc}
\usepackage{babel}
\usepackage{latexsym}
\usepackage{prettyref}
\usepackage{booktabs}
\usepackage{mathrsfs}
\usepackage{mathtools}
\usepackage{amsmath}
\usepackage{amsthm}
\usepackage{amssymb}
\usepackage{graphicx}
\usepackage{setspace}
\onehalfspacing
\usepackage[unicode=true,pdfusetitle,
 bookmarks=true,bookmarksnumbered=false,bookmarksopen=false,
 breaklinks=false,pdfborder={0 0 1},backref=false,colorlinks=false]
 {hyperref}

\makeatletter

\providecommand{\tabularnewline}{\\}

  \theoremstyle{plain}
  \newtheorem*{assumption*}{\protect\assumptionname}
  \theoremstyle{remark}
  \newtheorem*{rem*}{\protect\remarkname}
\theoremstyle{plain}
\newtheorem{thm}{\protect\theoremname}
  \theoremstyle{plain}
  \newtheorem{lem}[thm]{\protect\lemmaname}
  \theoremstyle{plain}
  \newtheorem{prop}[thm]{\protect\propositionname}

\pdfoutput=1

\usepackage{clrscode3e}

\usepackage{microtype}

\usepackage{geometry}
\geometry{margin=1.5in}

\usepackage[dvipsnames]{xcolor}
\newcommand\myshade{100}
\definecolor{mylinkcolorhtml}{HTML}{0066cc}
\definecolor{mycitecolorhtml}{HTML}{cc6600}
\definecolor{myurlcolorhtml}{HTML}{0066cc}
\colorlet{mylinkcolor}{mylinkcolorhtml}
\colorlet{mycitecolor}{mycitecolorhtml}
\colorlet{myurlcolor}{myurlcolorhtml}
\hypersetup{
  linkcolor  = mylinkcolor!\myshade!black,
  citecolor  = mycitecolor!\myshade!black,
  urlcolor   = myurlcolor!\myshade!black,
  colorlinks = true,
}

\newrefformat{assu}{Assumption \ref{#1}}
\newrefformat{prop}{Proposition \ref{#1}}
\newrefformat{fig}{Figure \ref{#1}}
\newrefformat{tab}{Table \ref{#1}}
\newrefformat{def}{Definition \ref{#1}}
\newrefformat{exa}{Example \ref{#1}}
\newrefformat{cor}{Corollary \ref{#1}}
\newrefformat{sec}{\S\ref{#1}}
\newrefformat{subsec}{\S\ref{#1}}
\newrefformat{app}{Appendix \ref{#1}}
\newrefformat{rem}{Remark \ref{#1}}

\renewcommand*{\leq}{\leqslant}
\renewcommand*{\geq}{\geqslant}

\tolerance=10000

\usepackage{upref}


\@ifundefined{showcaptionsetup}{}{%
 \PassOptionsToPackage{caption=false}{subfig}}
\usepackage{subfig}
\AtBeginDocument{
  
}

\makeatother

  \providecommand{\assumptionname}{Assumption}
  \providecommand{\lemmaname}{Lemma}
  \providecommand{\propositionname}{Proposition}
  \providecommand{\remarkname}{Remark}
\providecommand{\theoremname}{Theorem}

\begin{document}
\global\long\def\leq{\leqslant}

\global\long\def\geq{\geqslant}

\global\long\def\card{\operatorname{card}}

\global\long\def\diag{\operatorname{diag}}

\date{\date{}}

\title{Stock loans with liquidation\\
\large A technical report commissioned by the Bank of Nova Scotia\thanks{The views expressed herein are solely those of the author and not
those of any other person or entity.}}

\author{Parsiad Azimzadeh\thanks{David R. Cheriton School of Computer Science, University of Waterloo,
Waterloo ON, Canada N2L 3G1 {\tt \href{mailto:pazimzad@uwaterloo.ca}{pazimzad@uwaterloo.ca}}.} }
\maketitle
\begin{abstract}
We derive a ``semi-analytic'' solution for a stock loan in which
the lender forces liquidation when the loan-to-collateral ratio drops
beneath a certain threshold. We use this to study the sensitivity
of the contract to model parameters.
\end{abstract}

\section{Introduction}

We  study a \emph{stock loan} contract in which the lender can force
the client to liquidate. It was originally pointed out in \cite{xia2007stock}
that stock loans are essentially American call options with negative
interest rates. The negative rate appears since the contract is effectively
discounted by $r-\gamma$, where $\gamma$, the loan interest rate,
is generally larger than the riskless rate of return $r$.

Concretely, a stock loan is a loan of size $q$ obtained from a financial
firm (lender) by posting shares of an asset valued at $s$ as collateral.
Such loans are usually \emph{nonrecourse} in that if the stock price
drops, the borrower (client) may simply \emph{forfeit ownership of
the shares} in lieu of repaying the loan. On the other hand, if the
stock price rises, the client can regain their shares by \emph{repaying
the loan }(along with the accrued interest).

The liquidation clause is useful from the perspective of the lender
as it reduces the amount of risk the lender is exposed to, simultaneously
reducing the rational premium charged for the loan. We find that such
loans are riskless in the absence of jumps, in which case neither
lender nor client benefits from entering into such a loan (simultaneously
motivating the need for a jump-diffusion model; in particular, we
have chosen the HEM for its analytical tractability and ample freedom
in shaping return distributions).

\cite{xia2007stock} studies a perpetual contract (i.e., one that
does not expire) with dividends paid out to the lender. In this original
model, the lender cannot take action once the contract is initiated.
\cite{zhang2009valuation} studies stock loans under regime-switching.
Optimal strategies for both perpetual and finite-maturity contracts
subject to various dividend distribution schemes are studied in \cite{dai2011optimal}.
A model in which the asset is driven by a hyper-exponential jump-diffusion
is considered in \cite{cai2014valuation} for both perpetual and finite-maturity
contracts. Some other works on stock loans are \cite{liang2010stock,liu2010capped,liang2012variational,wong2012stochastic,grasselli2013stock,pascucci2013mathematical,wong2013valuation,chen2015stock,leung2015optimal,lu2015semi}.

\section{\label{sec:hyper_exponential_model}Hyper-exponential model}

Consider a stochastic process $(X_{t}^{x})_{t\geq0}$ following a
hyper-exponential model (HEM)
\[
X_{t}^{x}\coloneqq x+\mu t+\sigma W_{t}+\sum_{i=1}^{N_{t}}Y_{i}
\]
where $(N_{t})_{t\geq0}$ is a (right-continuous) Poisson process
with rate $\lambda$, $(W_{t})_{t\geq0}$ is a standard Brownian motion,
and $(Y_{i})_{i=1}^{\infty}$ is a sequence of i.i.d. hyper-exponential
random variables with p.d.f.
\[
x\mapsto\sum_{i=1}^{m}p_{i}\eta_{i}e^{-\eta_{i}x}1_{\left\{ x\geq0\right\} }+\sum_{j=1}^{n}q_{j}\theta_{j}e^{\theta_{j}x}1_{\left\{ x<0\right\} }.
\]
We make the following assumption throughout:
\begin{assumption*}
$\lambda\geq0$, $p_{i}>0$ for all $i$, $q_{j}>0$ for all $j$,
$1<\eta_{1}<\cdots<\eta_{m}$, $0<\theta_{1}<\cdots<\theta_{n}$,
$m$ and $n$ are nonnegative integers not both equal to zero, and
$\sum_{i=1}^{m}p_{i}+\sum_{j=1}^{n}q_{j}=1$ (subject to the convention
$\sum_{i=1}^{0}\cdot=0$).
\end{assumption*}
The Lévy exponent of the process is
\[
G(x)\coloneqq\sigma^{2}x^{2}/2+\mu x+\lambda\left(\sum_{i=1}^{m}\frac{p_{i}\eta_{i}}{\eta_{i}-x}+\sum_{j=1}^{n}\frac{q_{j}\theta_{j}}{\theta_{j}+x}-1\right)\text{ for }x\in(-\theta_{1},\eta_{1}).
\]
In \cite[Lemma 2.1 and Remark 2.3]{cai2009first}, it is shown that
for any $\alpha\geq\mathbb{M}(G)$ where
\[
\mathbb{M}(G)\coloneqq\inf\left\{ G(x)\colon x\in(-\theta_{1},\eta_{1})\right\} \leq0,
\]
$x\mapsto G(x)-\alpha$ has $m+n+2$ real roots $\beta_{1,\alpha}$,
$\ldots$, $\beta_{m+1,\alpha}$, $-\gamma_{1,\alpha}$, $\ldots$,
and $-\gamma_{n+1,\alpha}$ satisfying (omitting the subscript $\alpha$)
\begin{gather}
-\infty<-\gamma_{n+1}<-\theta_{n}<-\gamma_{n}<\cdots<-\gamma_{2}<-\theta_{1}<-\gamma_{1}\nonumber \\
\leq\beta_{1}<\eta_{1}<\beta_{2}<\cdots<\beta_{m}<\eta_{m}<\beta_{m+1}<\infty.\label{eq:root_inequalities}
\end{gather}
$-\gamma_{1}=\beta_{1}$ is a possibility, in which case there are
only $m+n+1$ \emph{distinct} roots.

\section{Stock loans with liquidation}

We consider a stock loan contract similar to \cite{xia2007stock}.
At the initial time, the client borrows an amount $q$ from the lender
using one share of the stock with initial price $e^{x}$ as collateral.
The stock follows (in log space) the process $X$ defined in \prettyref{sec:hyper_exponential_model}
with $\mu\coloneqq r-\delta-\sigma^{2}/2-\lambda\zeta$ where $r$
is the risk-free rate, $\delta$ is the dividend rate, $\sigma$ is
the volatility, and $\zeta\coloneqq\mathbb{E}e^{Y_{1}}-1$. The loan
is continuously compounded at the rate $\gamma$. At any (stopping)
time $t$, the client can choose to redeem the stock and pay back
$qe^{\gamma t}$. If the loan-to-collateral ratio $qe^{\gamma t-X_{t}^{x}}$
exceeds the level $d$, the lender liquidates the loan, forcing the
client to pay back $qe^{\gamma t}$. During the collateral period,
stock dividends are collected by the lender.
\begin{rem*}
Transfer-of-title stock loans were shut down by the SEC and IRS between
2007-2012 and reclassified as fully taxable sales at inception (see
also \cite{finra}). It stands to reason that the case of $\delta>0$
implies a transfer-of-title (since the lender receives the dividends),
and hence is subject to tax considerations. The case of $\delta=0$
corresponds to dividends being immediately reinvested in the stock
and returned to the client upon redemption, and is thus arguably the
more relevant case for the current era. Other dividend distribution
schemes are visited in \cite{dai2011optimal}.
\end{rem*}
If we assume that the client is able to pick a time to redeem the
stock from $\mathscr{T}$, the set of $[0,+\infty]$ stopping times,
the value of this contract is
\[
\sup_{\tau\in\mathscr{T}}\mathbb{E}\left[e^{-r(\tau\wedge\pi)}\left(e^{X_{\tau\wedge\pi}^{x}}-qe^{\gamma(\tau\wedge\pi)}\right)^{+}\right]\text{ where }\pi\coloneqq\inf\left\{ t\geq0\colon qe^{\gamma t-X_{t}^{x}}\geq d\right\} .
\]
It is understood that the expression in the expectation is zero whenever
$\tau\wedge\pi=+\infty$. However, due to the presence of the stopping
time $\pi$, it is not a simple matter to show that the above is a
free boundary problem with respect to $x$. This makes the above problem
difficult if we are seeking analytical solutions (see, e.g., \cite[Proposition 3.1]{xia2007stock}).
Instead, we consider the simpler
\begin{multline*}
v(x)\coloneqq\sup_{u\in(0,d)}\mathbb{E}\left[e^{-r(\tau\wedge\pi)}\left(e^{X_{\tau\wedge\pi}^{x}}-qe^{\gamma(\tau\wedge\pi)}\right)^{+}\right]\\
\text{ where }\pi\coloneqq\inf\left\{ t\geq0\colon qe^{\gamma t-X_{t}^{x}}\geq d\right\} \text{ and }\tau\coloneqq\inf\left\{ t\geq0\colon qe^{\gamma t-X_{t}^{x}}\leq u\right\} ,
\end{multline*}
in which the client picks instead a level $u$ to stop at based on
the loan-to-collateral ratio.
\begin{assumption*}
$0<d\leq1$, $\delta\geq0$, and $\gamma\geq r\geq0$.
\end{assumption*}
The following is a trivial consequence of the definition of $v$. 
\begin{lem}
\label{lem:bounds_on_v}$(e^{x}-q)^{+}\leq v(x)\leq e^{x}$ everywhere
and $v(x)=(e^{x}-q)$ for $x\leq\ln(q/d)$.
\end{lem}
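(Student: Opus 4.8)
The plan is to rephrase the problem in terms of the loan-to-collateral ratio $L_{t}\coloneqq qe^{\gamma t-X_{t}^{x}}$, so that $L_{0}=qe^{-x}$, $\pi=\inf\{t\geq0\colon L_{t}\geq d\}$, and, for a chosen level $u$, $\tau=\inf\{t\geq0\colon L_{t}\leq u\}$. Both bounds and the closed-form region should then each follow from a single substitution into the definition of $v$, with the sole exception of the upper bound, which needs one genuine (but standard) martingale estimate.

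\emph{Lower bound and the region $x\leq\ln(q/d)$.} Suppose first that $x>\ln(q/d)$, that is, $L_{0}=qe^{-x}\in(0,d)$. Then $u\coloneqq qe^{-x}$ is an admissible level and it forces $\tau=0$, hence $\tau\wedge\pi=0$; the corresponding term in the supremum is therefore exactly $\bigl(e^{X_{0}^{x}}-q\bigr)^{+}=(e^{x}-q)^{+}$, and taking the supremum over $u$ gives $v(x)\geq(e^{x}-q)^{+}$. Suppose instead that $x\leq\ln(q/d)$, that is, $L_{0}=qe^{-x}\geq d$. Then $\pi=0$, so $\tau\wedge\pi=0$ for \emph{every} admissible $u$, and the supremum collapses to the single, deterministic value $\bigl(e^{X_{0}^{x}}-q\bigr)^{+}=(e^{x}-q)^{+}$; this is the claimed closed form for $v$ on that half-line (it equals $e^{x}-q$ exactly where the latter is nonnegative), and in particular $v(x)\geq(e^{x}-q)^{+}$ holds there too.

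\emph{Upper bound.} Since $(a-b)^{+}\leq a$ for $a\geq0$, the integrand in the definition of $v$ is dominated by $e^{-r(\tau\wedge\pi)}e^{X_{\tau\wedge\pi}^{x}}$ on $\{\tau\wedge\pi<\infty\}$ and equals $0$ on $\{\tau\wedge\pi=+\infty\}$ by convention, so it is enough to prove $\mathbb{E}\bigl[e^{-rT}e^{X_{T}^{x}}1_{\{T<\infty\}}\bigr]\leq e^{x}$ for every stopping time $T$. I would deduce this from a supermartingale argument. Because $\eta_{1}>1$ we have $1\in(-\theta_{1},\eta_{1})$ and $\mathbb{E}e^{Y_{1}}<\infty$, so the standard computation gives $\mathbb{E}\bigl[e^{X_{t}^{x}}\bigr]=e^{x}e^{G(1)t}$, and a one-line substitution using $\mu=r-\delta-\sigma^{2}/2-\lambda\zeta$ and $\zeta=\mathbb{E}e^{Y_{1}}-1$ gives $G(1)=r-\delta$; by stationarity and independence of the increments of $X$ this upgrades to the statement that $(e^{-(r-\delta)t}e^{X_{t}^{x}})_{t\geq0}$ is a martingale. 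Since $\delta\geq0$, multiplying by the nonincreasing deterministic factor $e^{-\delta t}$ makes $(e^{-rt}e^{X_{t}^{x}})_{t\geq0}$ a \emph{nonnegative} supermartingale, and the required inequality is then the optional stopping theorem for nonnegative supermartingales (equivalently: use $\mathbb{E}[e^{-r(T\wedge n)}e^{X_{T\wedge n}^{x}}]\leq e^{x}$ and let $n\to\infty$ by Fatou). Taking the supremum over $u$ gives $v(x)\leq e^{x}$.

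\emph{Where the work is.} As the author notes, the statement is essentially bookkeeping once it is written through $L_{t}$; the only step that is not immediate is justifying optional stopping for $(e^{-rt}e^{X_{t}^{x}})$ at the possibly infinite time $\tau\wedge\pi$, which is precisely why one passes through the nonnegative-supermartingale form of the theorem (or truncates at $T\wedge n$ and applies Fatou) rather than the bounded-time form, together with the convention that the payoff vanishes on $\{\tau\wedge\pi=+\infty\}$.
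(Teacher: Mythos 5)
Your proof is correct; the paper offers no argument for this lemma (it is dismissed as ``a trivial consequence of the definition of $v$''), and what you write is exactly the natural filling-in: the choice $u=qe^{-x}\in(0,d)$ forcing $\tau=0$ (resp.\ the observation that $\pi=0$) handles the lower bound and the region $x\leq\ln(q/d)$, while the upper bound follows from the nonnegative supermartingale $e^{-rt+X_{t}^{x}}$ --- consistent with $G(1)=r-\delta$ and the identity $\mathbb{E}[e^{X_{t}^{x}-rt}]=e^{x-\delta t}$ that the paper itself invokes later --- via optional stopping or truncation and Fatou, which is indeed the one step requiring care because $\tau\wedge\pi$ may be infinite. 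One further point in your favour: the definition actually yields $v(x)=(e^{x}-q)^{+}$ on $x\leq\ln(q/d)$, so the paper's unclipped form $e^{x}-q$ is literally correct only for $\ln q\leq x\leq\ln(q/d)$, and your parenthetical remark resolves this accurately.
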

The following is a trivial, but interesting aside: it establishes
that the client has no reason to take out a stock loan in the absence
of (downward) jumps in the collateral value (equivalently, the lender
is exposed to no risk).
\begin{lem}
\label{lem:no_jumps_implies_riskless}If $n=0$ or $\lambda=0$, then
$v(x)=(e^{x}-q)^{+}$ everywhere.
\end{lem}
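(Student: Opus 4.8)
The plan is to pair the lower bound $v(x)\geq(e^{x}-q)^{+}$ from Lemma~\ref{lem:bounds_on_v} with a supermartingale proof of the matching upper bound $v(x)\leq(e^{x}-q)^{+}$. I first dispose of the easy region: if $x\leq\ln(q/d)$ then $qe^{-x}\geq d$, so $\pi=0$, hence $\tau\wedge\pi=0$ for every $u$ and the value is the deterministic quantity $(e^{X_{0}^{x}}-q)^{+}=(e^{x}-q)^{+}$. So assume $x>\ln(q/d)$; then the loan-to-collateral ratio starts at $Z_{0}\coloneqq qe^{-x}<d\leq1$, equivalently $e^{x}>q$, so $(e^{x}-q)^{+}=e^{x}-q$ and $\pi>0$. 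Fixing $u\in(0,d)$ and its associated $\tau$, I must show $\mathbb{E}[e^{-r(\tau\wedge\pi)}(e^{X_{\tau\wedge\pi}^{x}}-qe^{\gamma(\tau\wedge\pi)})^{+}]\leq e^{x}-q$.

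Introduce the discounted stock price $M_{t}\coloneqq e^{-rt}e^{X_{t}^{x}}$, the discounted loan balance $L_{t}\coloneqq qe^{(\gamma-r)t}$, and $N_{t}\coloneqq M_{t}-L_{t}$; then the integrand above equals $(M_{\tau\wedge\pi}-L_{\tau\wedge\pi})^{+}$ and $Z_{t}=qe^{\gamma t-X_{t}^{x}}=L_{t}/M_{t}$. The drift $\mu=r-\delta-\sigma^{2}/2-\lambda\zeta$ is the dividend-adjusted risk-neutral one, i.e.\ $G(1)=r-\delta$, so $\mathbb{E}e^{X_{t}^{x}}=e^{x}e^{(r-\delta)t}$ and $e^{-(r-\delta)t}e^{X_{t}^{x}}$ is a martingale; since $\delta\geq0$, $M$ is a nonnegative supermartingale, and since $\gamma\geq r$, $L$ is deterministic and nondecreasing, so $N$ is a supermartingale with $N_{0}=e^{x}-q>0$.

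The hypothesis $n=0$ or $\lambda=0$ now enters: in both cases $X$ has no downward jumps, so $\gamma t-X_{t}^{x}$ has no upward jumps, so the c\`adl\`ag process $Z$ has no upward jumps. I would use this to show $Z_{\pi}=d$ on $\{\pi<\infty\}$: since $Z_{t}<d$ for all $t<\pi$ we get $Z_{\pi-}\leq d$, while $\pi$ is the first entrance time of the closed set $[d,\infty)$ so $Z_{\pi}\geq d$, and $\Delta Z_{\pi}=Z_{\pi}-Z_{\pi-}\leq0$ forces equality. Together with $d\leq1$ this gives the key positivity of the process stopped at $\pi$: for $t<\pi$, $Z_{t}<d\leq1$ forces $M_{t}>L_{t}$ so $N_{t}>0$; and at $t=\pi<\infty$, $M_{\pi}=L_{\pi}/d\geq L_{\pi}$ so $N_{\pi}=L_{\pi}(1-d)/d\geq0$. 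Hence $\tilde{N}_{t}\coloneqq N_{t\wedge\pi}$ is a \emph{nonnegative} supermartingale (supermartingale by optional stopping applied to $N$, nonnegative by the two displayed bounds).

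Being a nonnegative c\`adl\`ag supermartingale, $\tilde{N}$ converges a.s.\ to some $\tilde{N}_{\infty}\geq0$, and $\mathbb{E}[\tilde{N}_{\tau\wedge t}]\leq\tilde{N}_{0}$ together with Fatou as $t\to\infty$ gives $\mathbb{E}[\tilde{N}_{\tau}]\leq\tilde{N}_{0}=e^{x}-q$, where $\tilde{N}_{\tau}\coloneqq\tilde{N}_{\infty}$ on $\{\tau=\infty\}$. Since $\tilde{N}_{\tau}=N_{\tau\wedge\pi}$ on $\{\tau\wedge\pi<\infty\}$ and $\tilde{N}_{\tau}\geq0$ throughout, dropping the $\{\tau\wedge\pi=\infty\}$ term leaves $\mathbb{E}[N_{\tau\wedge\pi}1_{\{\tau\wedge\pi<\infty\}}]\leq e^{x}-q$; on $\{\tau\wedge\pi<\infty\}$ the nonnegativity just proved makes $N_{\tau\wedge\pi}=(M_{\tau\wedge\pi}-L_{\tau\wedge\pi})^{+}$, and on $\{\tau\wedge\pi=\infty\}$ the payoff is zero by convention, so $\mathbb{E}[e^{-r(\tau\wedge\pi)}(e^{X_{\tau\wedge\pi}^{x}}-qe^{\gamma(\tau\wedge\pi)})^{+}]\leq e^{x}-q$ as required; taking the supremum over $u$ and using Lemma~\ref{lem:bounds_on_v} finishes. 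I expect the only delicate step to be the implication ``$Z$ has no positive jumps $\Rightarrow Z_{\pi}=d$'', which requires the usual care with first entrance times of closed sets by c\`adl\`ag paths ($\pi$ a stopping time, $\pi>0$, the meaning of $Z_{\pi-}$, and the fact that only an upward jump can overshoot $d$); this is the sole place the jump hypothesis is used, since when $X$ has downward jumps $Z$ can overshoot $d$ at an upward jump, making $N_{\pi}=L_{\pi}(1/Z_{\pi}-1)$ possibly negative. The optional stopping at the possibly infinite $\tau$ is routine.
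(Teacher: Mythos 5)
Your argument is correct and follows essentially the same route as the paper: the hypothesis is used exactly once, to show $qe^{\gamma\pi-X_{\pi}^{x}}=d$ on $\{\pi<\infty\}$ (no downward jumps of $X$ means no overshoot of the liquidation level), after which the positive part can be dropped and a discounted-supermartingale/optional-sampling bound, using $\delta\geq0$ and $\gamma\geq r$, yields $v(x)\leq e^{x}-q$, matched by the lower bound of Lemma~\ref{lem:bounds_on_v}. Your write-up merely makes explicit (stopped nonnegative supermartingale plus Fatou) the step the paper compresses into the inequality $\mathbb{E}[e^{-r(\tau\wedge\pi)}(e^{X_{\tau\wedge\pi}^{x}}-qe^{\gamma(\tau\wedge\pi)})]\leq\sup_{t\geq0}\{e^{x-\delta t}-qe^{(\gamma-r)t}\}=e^{x}-q$.
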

\begin{proof}
Note that in this case, $qe^{\gamma\pi(\omega)-X_{\pi(\omega)}^{x}}=d$
for $\mathbb{P}$-almost all $\omega$ such that $\pi(\omega)<\infty$.
Therefore, for any stopping time $\tau$, $qe^{\gamma[\tau(\omega)\wedge\pi(\omega)]-X_{\tau(\omega)\wedge\pi(\omega)}^{x}}\leq d\leq1$
and hence $qe^{\gamma[\tau(\omega)\wedge\pi(\omega)]}\leq e^{X_{\tau(\omega)\wedge\pi(\omega)}^{x}}$
for $\mathbb{P}$-almost all $\omega$ such that $\tau(\omega)\wedge\pi(\omega)<\infty$.
It follows that
\[
v(x)\leq\sup_{\tau\in\mathscr{T}}\mathbb{E}\left[e^{-r(\tau\wedge\pi)}\left(e^{X_{\tau\wedge\pi}^{x}}-qe^{\gamma(\tau\wedge\pi)}\right)\right]\leq\sup_{t\geq0}\left\{ e^{x-\delta t}-qe^{(\gamma-r)t}\right\} =e^{x}-q\text{ for }x\geq\ln q.\qedhere
\]
\end{proof}
Returning to our objective, define a drift-adjusted process $(\tilde{X}_{t}^{x})_{t\geq0}$
by $\tilde{X}_{t}^{x}\coloneqq X_{t}^{x}-\gamma t$ with Lévy exponent
$\tilde{G}(x)\coloneqq G(x)-\gamma x$. Letting $f(x)\coloneqq(e^{x}-q)^{+}$,
we can write 
\begin{multline}
v(x)=\sup_{u\in(0,d)}\mathbb{E}\left[e^{(\gamma-r)\tilde{\tau}}f(\tilde{X}_{\tilde{\tau}}^{x})\right]\text{ where }\tilde{\tau}\coloneqq\inf\left\{ t\geq0\colon\tilde{X}_{t}^{x}\notin(h,H)\right\} \text{,}\\
h\coloneqq\ln(q/d)\text{, and }H\coloneqq\ln(q/u).\label{eq:simple_v}
\end{multline}
For fixed values of $h$ and $H$, we may compute the expectation
\begin{equation}
\mathbb{E}\left[e^{(\gamma-r)\tilde{\tau}}f(\tilde{X}_{\tilde{\tau}}^{x})\right]\label{eq:expectation}
\end{equation}
using \prettyref{prop:cai_theorem3.1} of the appendix (see, in particular,
expression \eqref{eq:closed_form}). Before we can do so, we must
ensure the finitude of the expectation.
\begin{assumption*}
Either (i) $\delta>0$ or (ii) $\delta=0$ and $\frac{d\tilde{G}}{dx}(1)<0$. 
\end{assumption*}
Subject to the above, \cite[Theorem 3.1]{cai2014valuation} holds,
repeated below for convenience.
\begin{prop}
\label{lem:cai_theorem3.1}$\mathbb{E}[\sup_{t\geq0}e^{(\gamma-r)t}f(\tilde{X}_{t}^{x})]<\infty$.
\end{prop}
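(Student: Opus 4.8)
The plan is to dominate the integrand by a power of the discounted collateral value, choose that power so that the exponent governing the resulting exponential functional is strictly negative, and then sum geometrically over unit time intervals.

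Concretely, fix an exponent $\theta\in[1,\eta_{1})$, to be pinned down below, and note first that $(e^{y}-q)^{+}$ vanishes for $y\leq\ln q$ while $e^{-\theta y}(e^{y}-q)$ is continuous and nonnegative on $[\ln q,\infty)$ and tends to $0$ both as $y\downarrow\ln q$ and as $y\to\infty$; hence $C_{\theta}\coloneqq\sup_{y\geq\ln q}e^{-\theta y}(e^{y}-q)^{+}$ is finite (with $C_{1}=1$), so $f(y)\leq C_{\theta}e^{\theta y}$ for all $y$. Since $(\gamma-r)t+\theta\tilde{X}_{t}^{x}=\theta X_{t}^{x}+\kappa(\theta)t$ with $\kappa(\theta)\coloneqq(1-\theta)\gamma-r\leq0$, this gives
\[
e^{(\gamma-r)t}f(\tilde{X}_{t}^{x})\leq C_{\theta}e^{\theta X_{t}^{x}+\kappa(\theta)t}=:C_{\theta}Z_{t},\qquad\mathbb{E}Z_{t}=e^{\theta x}e^{t\Lambda(\theta)},
\]
where $\Lambda(\theta)\coloneqq G(\theta)+\kappa(\theta)=\tilde{G}(\theta)+(\gamma-r)$. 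The function $\Lambda$ is convex on $(-\theta_{1},\eta_{1})$ (because $G$ is), $\Lambda(1)=\tilde{G}(1)+(\gamma-r)=-\delta$, and $\Lambda'(1)=\tilde{G}'(1)$. Thus under (i) I take $\theta=1$, so $\Lambda(1)=-\delta<0$; under (ii) I have $\Lambda(1)=0$ and $\Lambda'(1)=\tilde{G}'(1)<0$, so $\Lambda$ is strictly decreasing just to the right of $1$ and I may pick $\theta\in(1,\eta_{1})$ with $\Lambda(\theta)<0$. Either way $\theta$ is now fixed with $\rho\coloneqq e^{\Lambda(\theta)}<1$.

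It remains to show $\mathbb{E}[\sup_{t\geq0}Z_{t}]<\infty$. Because $X$ is a Lévy process, $Z_{t}=e^{\theta X_{t}^{x}+\kappa(\theta)t}$ has independent multiplicative increments, so writing $\sup_{t\geq0}Z_{t}=\sup_{n\geq0}\sup_{t\in[n,n+1]}Z_{t}$ and $\sup_{t\in[n,n+1]}Z_{t}=Z_{n}R_{n}$ with $R_{n}\coloneqq\sup_{s\in[0,1]}(Z_{n+s}/Z_{n})$, each $R_{n}$ depends only on the increments of $X$ on $[n,n+1]$, hence is independent of $Z_{n}$ and has the common law of $R\coloneqq\sup_{s\in[0,1]}e^{\theta(X_{s}^{0}-X_{0}^{0})+\kappa(\theta)s}$. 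By Tonelli and this factorization,
\[
\mathbb{E}\Bigl[\sup_{t\geq0}Z_{t}\Bigr]\leq\sum_{n\geq0}\mathbb{E}[Z_{n}R_{n}]=\mathbb{E}[R]\sum_{n\geq0}\mathbb{E}Z_{n}=\mathbb{E}[R]\,e^{\theta x}\sum_{n\geq0}\rho^{n}=\frac{\mathbb{E}[R]\,e^{\theta x}}{1-\rho}.
\]
Finally $\mathbb{E}[R]<\infty$: since $\kappa(\theta)\leq0$, $R\leq e^{\theta|\mu|}\exp(\theta\sigma\sup_{s\in[0,1]}W_{s})\exp(\theta\sum_{i=1}^{N_{1}}Y_{i}^{+})$, the Brownian and jump factors are independent, $\mathbb{E}[e^{\theta\sigma\sup_{[0,1]}W_{s}}]=\mathbb{E}[e^{\theta\sigma|W_{1}|}]<\infty$, and $\mathbb{E}[e^{\theta\sum_{i\leq N_{1}}Y_{i}^{+}}]=\exp(\lambda(\mathbb{E}[e^{\theta Y_{1}^{+}}]-1))<\infty$ because $\mathbb{E}[e^{\theta Y_{1}^{+}}]\leq1+\sum_{i}p_{i}\eta_{i}/(\eta_{i}-\theta)<\infty$ (here $\theta<\eta_{1}$ is used). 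Combining with $e^{(\gamma-r)t}f(\tilde{X}_{t}^{x})\leq C_{\theta}Z_{t}$ finishes the argument.

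I expect the only real subtlety to be the choice of exponent: the naive bound $f(y)\leq e^{y}$ (i.e.\ $\theta=1$) is insufficient when $\delta=0$, because then $e^{X_{t}^{x}-rt}$ is a martingale whose running supremum is typically not integrable, and it is precisely the hypothesis $\tilde{G}'(1)<0$ that allows one to pass to a strictly larger $\theta$ with $\Lambda(\theta)<0$, restoring geometric decay. The time-splitting and the exponential-moment estimates are then routine.
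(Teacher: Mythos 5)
Your proof is correct, but note that the paper itself does not prove this proposition at all: it is imported verbatim from \cite[Theorem 3.1]{cai2014valuation}, and the standing assumption ((i) $\delta>0$, or (ii) $\delta=0$ and $\tfrac{d\tilde{G}}{dx}(1)<0$) is precisely the hypothesis under which that cited theorem applies. What you have written is a self-contained substitute, and it is sound: dominating $f(y)=(e^{y}-q)^{+}$ by $C_{\theta}e^{\theta y}$ with $\theta\in[1,\eta_{1})$, rewriting $e^{(\gamma-r)t}e^{\theta\tilde{X}_{t}^{x}}=e^{\theta X_{t}^{x}+\kappa(\theta)t}$ with $\kappa(\theta)=(1-\theta)\gamma-r\leq0$, and choosing $\theta$ so that $\Lambda(\theta)=\tilde{G}(\theta)+\gamma-r<0$ is exactly where the dichotomy (i)/(ii) enters, via $\Lambda(1)=G(1)-r=-\delta$ (the identity $G(1)=r-\delta$ is built into the choice of $\mu$ and is the same fact the paper uses as $\mathbb{E}[e^{X_{t}^{x}-rt}]=e^{x-\delta t}$) and $\Lambda'(1)=\tilde{G}'(1)$. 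The unit-interval factorization $\sup_{[n,n+1]}Z=Z_{n}R_{n}$ with $R_{n}$ independent of $Z_{n}$, the Tonelli step, and the exponential-moment bounds on $R$ (reflection principle for the Brownian part, $\theta<\eta_{1}$ for the up-jumps, the compound-Poisson formula) are all valid, and your closing remark correctly identifies why the naive $\theta=1$ bound fails when $\delta=0$. What your route buys is transparency about how the finiteness assumption produces geometric decay, which the paper's citation conceals; what the citation buys is brevity and consistency with \cite{cai2014valuation}, whose machinery the paper reuses elsewhere. Two cosmetic points: for $\theta=1$ the function $e^{-y}(e^{y}-q)$ tends to $1$, not $0$, as $y\to\infty$ (your conclusion $C_{1}=1$ is nevertheless correct), and when $m=0$ the constraint $\theta<\eta_{1}$ should be read as vacuous, the up-jump moment bound being trivial in that case.
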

Some of the computations required to apply \prettyref{prop:cai_theorem3.1}
to \eqref{eq:expectation} are summarized below.
\begin{lem}
\label{lem:f_values}Let $q>0$, $\ln q\leq h<H$, and $f(x)\coloneqq(e^{x}-q)^{+}$.
Then, 
\begin{align*}
f_{0}^{u} & =e^{H}-q, & f_{i}^{u} & =\frac{e^{H}}{\eta_{i}-1}-\frac{q}{\eta_{i}} & \text{for }1\leq i\leq m;\\
f_{0}^{d} & =e^{h}-q, & f_{i}^{d} & =\frac{e^{-h\theta_{j}}q^{1+\theta_{j}}}{\theta_{j}\left(1+\theta_{j}\right)}+\frac{e^{h}}{1+\theta_{j}}-\frac{q}{\theta_{j}} & \text{for }1\leq j\leq n;
\end{align*}
where $f_{i}^{u}$ and $f_{j}^{d}$ are defined in \prettyref{prop:cai_theorem3.1}.
\end{lem}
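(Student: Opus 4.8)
The plan is a direct computation. Recall from \prettyref{prop:cai_theorem3.1} that $f_0^u$ and $f_0^d$ are just the values $f(H)$ and $f(h)$ of the payoff at the upper and lower endpoints of the interval $(h,H)$, while $f_i^u$ and $f_j^d$ encode the hyper-exponential overshoots of $\tilde X^x$ at the first exit above $H$ or below $h$, respectively; up to the normalization fixed in the appendix they are $f_i^u=\int_0^\infty e^{-\eta_i y}f(H+y)\,dy$ and $f_j^d=\int_0^\infty e^{-\theta_j y}f(h-y)\,dy$. So the lemma amounts to substituting $f(x)=(e^x-q)^+$ and evaluating four elementary integrals, with the only care needed being where the positive part is active.

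For the upper quantities, $\ln q\le h<H$ forces $H\ge\ln q$, hence $e^{H+y}-q\ge e^H-q\ge0$ for all $y\ge0$; the positive part is therefore vacuous, giving $f_0^u=e^H-q$ immediately and $f_i^u=e^H\int_0^\infty e^{(1-\eta_i)y}\,dy-q\int_0^\infty e^{-\eta_i y}\,dy$. The first integral converges precisely because $\eta_i>1$ (part of the standing Assumption in \prettyref{sec:hyper_exponential_model}) and equals $1/(\eta_i-1)$, while the second equals $1/\eta_i$; this is exactly the claimed $f_i^u$.

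For the lower quantities, $h\ge\ln q$ gives $f_0^d=e^h-q$. In $f_j^d=\int_0^\infty e^{-\theta_j y}(e^{h-y}-q)^+\,dy$ the integrand is positive exactly for $y\le c$, where $c\coloneqq h-\ln q\ge0$, so I would restrict to $[0,c]$ and split it as $e^h\int_0^c e^{-(1+\theta_j)y}\,dy-q\int_0^c e^{-\theta_j y}\,dy$. Evaluating these and rewriting the terms at the upper limit via $e^{-c}=qe^{-h}$ (so $e^{-(1+\theta_j)c}=q^{1+\theta_j}e^{-(1+\theta_j)h}$ and $e^{-\theta_j c}=q^{\theta_j}e^{-\theta_j h}$), the two multiples of $q^{1+\theta_j}e^{-h\theta_j}$ combine through $\tfrac1{\theta_j}-\tfrac1{1+\theta_j}=\tfrac1{\theta_j(1+\theta_j)}$, producing exactly the stated formula.

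There is no real obstacle here: this is a bookkeeping lemma, deliberately isolated so that \prettyref{prop:cai_theorem3.1} can be plugged into \eqref{eq:expectation} without clutter. The two mild pitfalls are (i) matching my integral expressions for $f_i^u$ and $f_j^d$ to the exact normalization in the appendix's restatement of \cite[Theorem 3.1]{cai2014valuation}, and (ii) not stopping short in the simplification of $f_j^d$ — the boundary contributions at $y=c$ must be re-expressed in terms of $q^{1+\theta_j}$ and combined rather than left in the raw $e^{-(1+\theta_j)c}$ form.
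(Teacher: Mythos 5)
Your computation is correct and is exactly the intended argument: the paper states \prettyref{lem:f_values} without proof, treating it as the routine evaluation of $f_i^u=\int_0^\infty f(y+H)e^{-\eta_i y}\,dy$ and $f_j^d=\int_{-\infty}^0 f(y+h)e^{\theta_j y}\,dy$ from \prettyref{prop:cai_theorem3.1} with $f(x)=(e^x-q)^+$, using $H>h\geq\ln q$ to drop the positive part in the upper integrals and to truncate the lower one at $h-\ln q$, and $\eta_i>1$ for convergence. Your change of variables $y\mapsto-y$ in $f_j^d$ matches the appendix's normalization, and the simplification producing the term $e^{-h\theta_j}q^{1+\theta_j}/(\theta_j(1+\theta_j))$ checks out.
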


\section{Numerical results}

\subsection{Algorithm}

Using \prettyref{prop:cai_theorem3.1} of the appendix, a direct computation
shows that the expectation in \eqref{eq:simple_v} is continuous as
a function of $u$. This inspires the algorithm below for approximating
$v$ at $x$. We write $\tilde{\tau}(u)$ to stress the dependence
of $\tilde{\tau}$ (defined in \eqref{eq:simple_v}) on $u$. Below,
$f$ is given by $f(x)\coloneqq(e^{x}-q)^{+}$.

\begin{codebox}

\Procname{\proc{Stock-Loan}$(x;N)$}

\li $V\coloneqq f(x)$

\li \If $x>\ln(q/d)$ \li \Then

    \For $j=1,2,\ldots,N$ \li \Do

           $u_{j}\coloneqq jd/(N+1)$

       \li $\mathrlap{V}\phantom{u_{j}}\coloneqq\max\{\mathbb{E}[e^{(\gamma-r)\tilde{\tau}(u_{j})}f(\tilde{X}_{\tilde{\tau}(u_{j})}^{x})],V\}$

\End

\End

\li \Return $V$

\end{codebox}

The integer $N\geq1$ controls the accuracy of the algorithm. The
expectation is computed using \prettyref{prop:cai_theorem3.1} and
\prettyref{lem:f_values}. There are various modifications one can
make to speed up this algorithm (e.g., using a non-uniform grid $\{u_{j}\}$),
though we do not visit them here.

\subsection{Lender's perspective}

To aid our understanding, we also consider the contract from the lender's
perspective. In particular, let $u$ be the map satisfying $v(x)=x-u(x)$.
This equation has the interpretation that the client retains ownership
of the stock (initially valued at $e^{x}$) and is short the contract
$u$. Conversely, the lender is long the contract $u$. Simple algebra
along with the fact $\mathbb{E}[e^{X_{t}^{x}-rt}]=e^{x-\delta t}$
reveals
\begin{equation}
u(s)=\inf_{u\in(0,d)}\mathbb{E}\left[e^{-r\tau}\left(e^{X_{\tau\wedge\pi}^{x}}\left(e^{\delta(\tau\wedge\pi)}-1\right)+\min\left\{ e^{X_{\tau\wedge\pi}^{x}},qe^{\gamma(\tau\wedge\pi)}\right\} \right)\right]\label{eq:lenders_contract}
\end{equation}
From the form \eqref{eq:lenders_contract}, it is easy to see that
$u$ includes dividend flows to the lender (the case in which dividends
are immediately reinvested in the stock and returned to the client
upon prepayment is equivalent to taking $\delta=0$; we refer to \cite[Section 3]{dai2011optimal}
for an explanation).

\subsection{Jump risk}

\prettyref{fig:varying_arrival_rates} shows the stock loan value
under varying jump arrival rates $\lambda$. As stipulated by \prettyref{lem:no_jumps_implies_riskless},
in the absence of jumps, the loan is riskless (i.e. $v(x)=(e^{x}-q)^{+}$,
or equivalently, $u(x)=\min\{e^{x},q\}$). However, as $\lambda$
increases, the lender is exposed to more risk. This coincides with
our intuition: as the probability of a downward jump is increased,
so too is the probability that $X_{t}^{x}$ jumps below the loan value
$q$.

The lender is exposed to the most risk at a point between $x=\ln(q/d)$
and $x\rightarrow\infty$. This is explained as follows:
\begin{itemize}
\item If the collateral is very close to $\ln(q/d)$, the event that a downward
jump brings $X_{t}^{x}$ below the loan value before the lender is
able to liquidate is unlikely.
\item If the collateral is very large and the client chooses to prepay,
the lender can retrieve the loan value in its entirety.
\end{itemize}
\begin{table}
\begin{centering}
{\scriptsize%
\begin{tabular}{rcl}
\toprule 
\multicolumn{2}{r}{\textbf{Parameter}} & \textbf{Value}\tabularnewline
\midrule 
Model &  & Double exponential ($m=n=1$)\tabularnewline
\midrule 
Risk-free rate & $r$ & $0.05$\tabularnewline
\midrule 
Dividend rate & $\delta$ & $0.02$\tabularnewline
\midrule 
Volatility & $\sigma$ & $0.15$\tabularnewline
\midrule 
Loan interest rate & $\gamma$ & $0.07$\tabularnewline
\midrule 
Initial collateral value & $e^{x}$ & $100$\tabularnewline
\midrule 
Loan value & $q$ & $80$\tabularnewline
\midrule 
Liquidation ratio & $d$ & $80/90\approx90\%$\tabularnewline
\midrule 
Jump arrival rate & $\lambda$ & $0.5$\tabularnewline
\midrule 
Mean up-jump scaling factor  & $1/\eta_{1}$ & $1/2.3$\tabularnewline
\midrule 
Mean down-jump scaling factor & $1/\theta_{1}$ & $1/1.8$\tabularnewline
\midrule 
Up-jump probability & $p_{1}$ & $0.09$\tabularnewline
\midrule 
Down-jump probability & $q_{1}$ & $1-p_{1}=0.91$\tabularnewline
\bottomrule
\end{tabular}}
\par\end{centering}
\caption{Default parameters\label{tab:default_parameters}}
\end{table}

\begin{figure}
\begin{centering}
\subfloat[]{\begin{centering}
\includegraphics[width=4in]{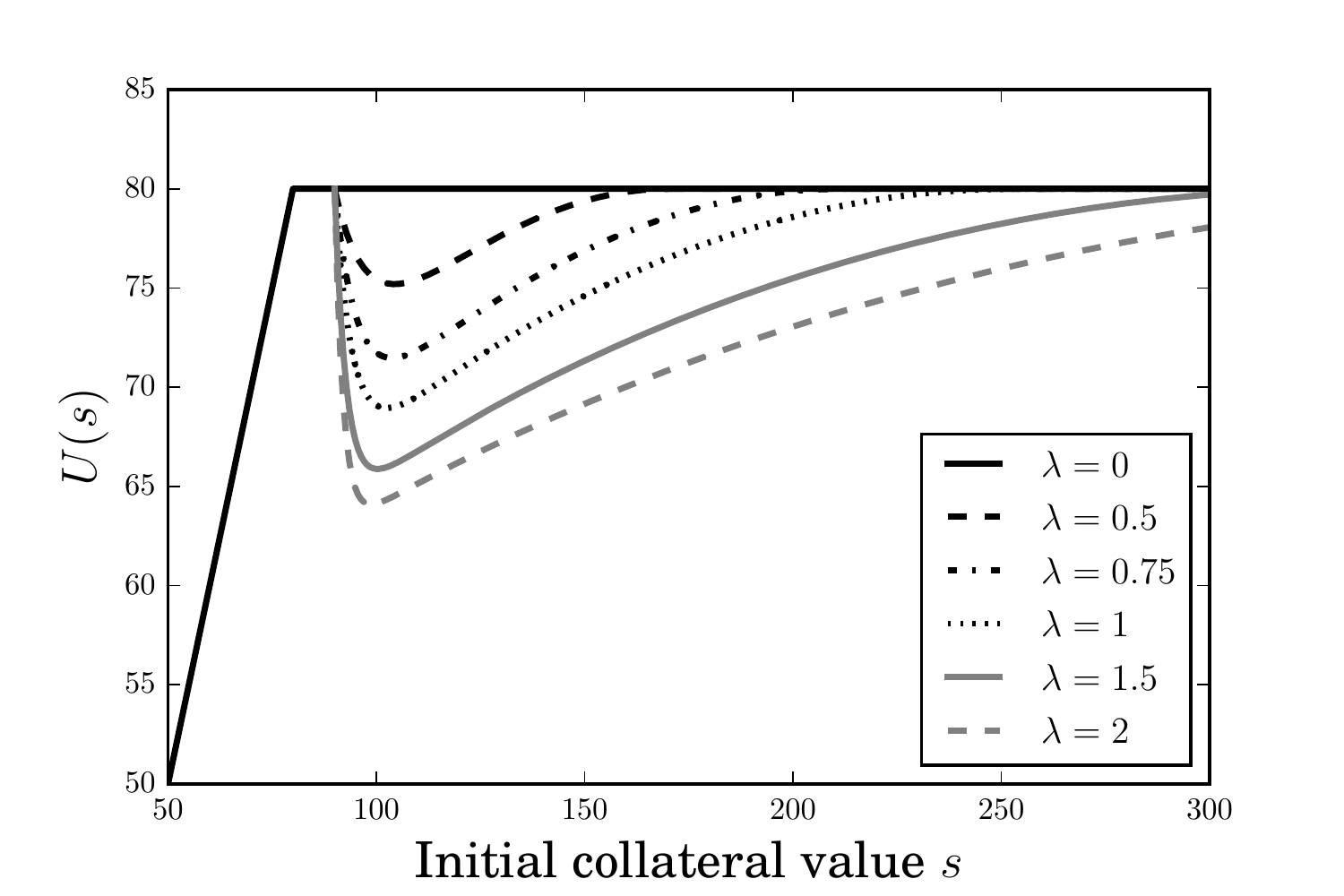}
\par\end{centering}
}
\par\end{centering}
\begin{centering}
\subfloat[$C^{1}$ discontinuity at $e^{x}=q/d=90$.]{\centering{}\includegraphics[width=4in]{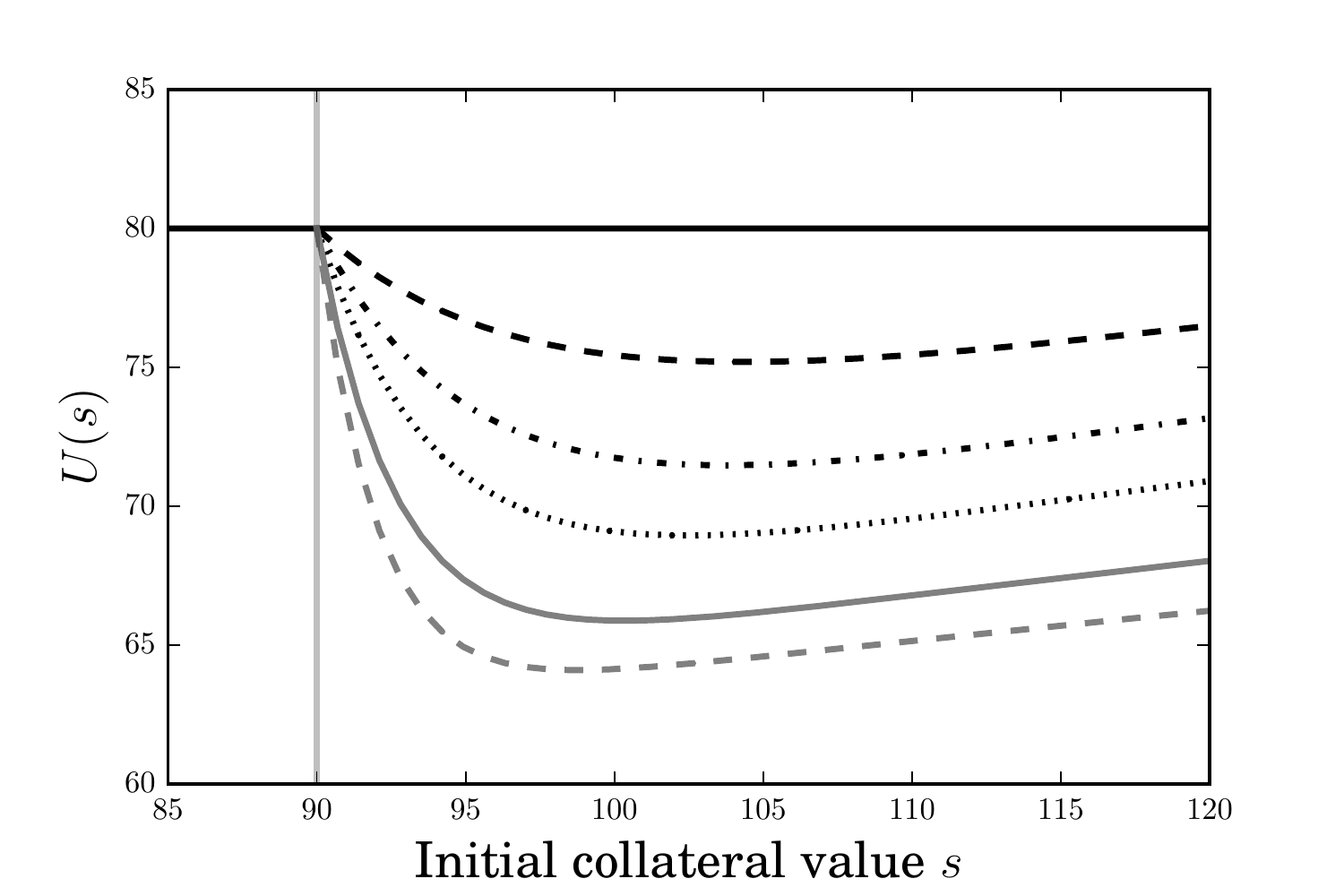}}
\par\end{centering}
\caption{Effect of varying jump arrival rates $\lambda$.\label{fig:varying_arrival_rates}}
\end{figure}

\subsection{Rational values}

Denote by $c$ an up-front premium. At time zero, the client essentially
exchanges the amount $e^{x}-q+c$ for a stock loan \cite{xia2007stock}.
It follows that $\gamma$ and $c$ are required to satisfy the following
identity to preclude arbitrage:
\begin{equation}
v(x;\gamma)=e^{x}-q+c\text{ (equivalently, }u(x;\gamma)=q-c\text{)}\label{eq:rational_identity}
\end{equation}
\prettyref{tab:rational_values} computes some \emph{rational values}
of $\gamma$, and $c$ satisfying \eqref{eq:rational_identity}.
\begin{table}
\begin{centering}
\subfloat[$\lambda=1$]{\begin{centering}
{\scriptsize%
\begin{tabular}{rllllllll}
\toprule 
$q$ & 30 & 40 & 50 & 60 & 70 & 80 & 90 & 100\tabularnewline
\midrule 
$u(x)$ & 30 & 39.29 & 47.26 & 54.51 & 61.35 & 69.09 & 90 & 100\tabularnewline
\midrule 
$c$ & 0 & 0.71 & 2.74 & 5.49 & 8.65 & 10.91 & 0 & 0\tabularnewline
\bottomrule
\end{tabular}}
\par\end{centering}
}
\par\end{centering}
\begin{centering}
\subfloat[$\lambda=2$]{\begin{centering}
{\scriptsize%
\begin{tabular}{rllllllll}
\toprule 
$q$ & 30 & 40 & 50 & 60 & 70 & 80 & 90 & 100\tabularnewline
\midrule 
$u(x)$ & 28.79 & 36.53 & 43.77 & 50.70 & 57.42 & 64.14 & 90 & 100\tabularnewline
\midrule 
$c$ & 1.21 & 3.47 & 6.23 & 9.30 & 12.58 & 15.86 & 0 & 0\tabularnewline
\bottomrule
\end{tabular}}
\par\end{centering}
}
\par\end{centering}
\caption{Rational values of $c$ for $\gamma=0.07$ and $e^{x}=100$ fixed.\label{tab:rational_values}}
\end{table}

Note that for $x\leq\ln(q/d)$, the rational premium is $c=(q-e^{x})^{+}$
(see \prettyref{fig:varying_arrival_rates} for further intuition).
In this case, neither client nor lender has any reason to enter into
the contract. For $x\geq k$ where $k\coloneqq\inf\{x>\ln(q/d)\colon u(x)\geq q\}$,
the rational premium is $c=0$, as the contract is exercised immediately.
It stands to reason that the only region of interest is $\ln(q/d)<x<k$.
Parameterizations in which $k=\ln(q/d)$ (i.e. the free boundary is
``collapsed'') are, therefore, uninteresting.

In particular, note the $C^{1}$ discontinuity occurring at $x=\ln(q/d)$
(\prettyref{fig:varying_arrival_rates}), corresponding to liquidation.
In the event that the free boundary is collapsed, $u(x)=\min\{e^{x},q\}$,
and this discontinuity is removed.
\begin{description}
\item [{Acknowledgements}] The author thanks Shuqing Ma of the Bank of
Nova Scotia and Peter Forsyth, Kenneth Vetzal, and George Labahn of
the University of Waterloo.
\end{description}
\urlstyle{same}

\appendix

\section{Laplace transform of first passage time to two barriers}

We require an expression for the Laplace transform of a process to
two flat barriers $h$ and $H$ with $h<H$. Formally, let
\[
\tau\coloneqq\inf\left\{ t\geq0\colon X_{t}^{x}\notin(h,H)\right\} .
\]
A generalization of the Laplace transform of $\tau$ is studied in
\cite{cai2009pricing}. A trivial modification of the authors' result
is repeated below.
\begin{prop}
\label{prop:cai_theorem3.1}Let $f$ be a nonnegative measurable function
such that 
\[
\int_{0}^{\infty}f(y+H)e^{-\eta_{i}y}dy\text{ and }\int_{-\infty}^{0}f(y+h)e^{\theta_{j}y}dy
\]
are integrable for all $1\leq i\leq m$ and $1\leq j\leq n$. Let
$\alpha\geq\mathbb{M}(G)$ and $x\in(h,H)$. Let $\boldsymbol{N}$
be an $(m+n+2)\times(m+n+2)$ matrix given by
\begin{equation}
\boldsymbol{N}\coloneqq\left[\begin{array}{cccccc}
1 & \cdots & 1 & \overline{x}^{\gamma_{1}} & \cdots & \overline{x}^{\gamma_{n+1}}\\
{\displaystyle \frac{1}{\eta_{1}-\beta_{1}}} & \cdots & {\displaystyle \frac{1}{\eta_{1}-\beta_{m+1}}} & {\displaystyle \frac{\overline{x}^{\gamma_{1}}}{\eta_{1}+\gamma_{1}}} & \cdots & {\displaystyle \frac{\overline{x}^{\gamma_{n+1}}}{\eta_{1}+\gamma_{n+1}}}\\
\vdots & \ddots & \vdots & \vdots & \ddots & \vdots\\
{\displaystyle \frac{1}{\eta_{m}-\beta_{1}}} & \cdots & {\displaystyle \frac{1}{\eta_{m}-\beta_{m+1}}} & {\displaystyle \frac{\overline{x}^{\gamma_{1}}}{\eta_{m}+\gamma_{1}}} & \cdots & {\displaystyle \frac{\overline{x}^{\gamma_{n+1}}}{\eta_{m}+\gamma_{n+1}}}\\
\overline{x}^{\beta_{1}} & \cdots & \overline{x}^{\beta_{m+1}} & 1 & \cdots & 1\\
{\displaystyle \frac{\overline{x}^{\beta_{1}}}{\theta_{1}+\beta_{1}}} & \cdots & {\displaystyle \frac{\overline{x}^{\beta_{m+1}}}{\theta_{1}+\beta_{m+1}}} & {\displaystyle \frac{1}{\theta_{1}-\gamma_{1}}} & \cdots & {\displaystyle \frac{1}{\theta_{1}-\gamma_{n+1}}}\\
\vdots & \ddots & \vdots & \vdots & \ddots & \vdots\\
{\displaystyle \frac{\overline{x}^{\beta_{1}}}{\theta_{n}+\beta_{1}}} & \cdots & {\displaystyle \frac{\overline{x}^{\beta_{m+1}}}{\theta_{n}+\beta_{m+1}}} & {\displaystyle \frac{1}{\theta_{n}-\gamma_{1}}} & \cdots & {\displaystyle \frac{1}{\theta_{n}-\gamma_{n+1}}}
\end{array}\right]\label{eq:matrix}
\end{equation}
where $\beta_{1},\ldots,\beta_{m+1}$ and $\gamma_{1},\ldots,\gamma_{n+1}$
are the real roots of $x\mapsto G(x)-\alpha$ satisfying \eqref{eq:root_inequalities}
and $\overline{x}\coloneqq e^{h-H}$. If $\boldsymbol{N}$ is nonsingular,
\begin{equation}
\mathbb{E}\left[e^{-\alpha\tau}f(X_{\tau}^{x})\right]=\boldsymbol{\varpi}(x)\boldsymbol{N}^{-1}\boldsymbol{f}\label{eq:closed_form}
\end{equation}
where $\boldsymbol{\varpi}(x)$ is a row vector defined as
\[
\boldsymbol{\varpi}(x)\coloneqq\left(e^{\beta_{1}\left(x-H\right)},\,\ldots,\,e^{\beta_{m+1}\left(x-H\right)},\,e^{-\gamma_{1}\left(x-h\right)},\,\ldots,\,e^{-\gamma_{n+1}\left(x-h\right)}\right),
\]
and $\boldsymbol{f}$ is a column vector such that $\boldsymbol{f}=(f_{0}^{u},\,\ldots,\,f_{m}^{u},\,f_{0}^{d},\,\ldots,\,f_{n}^{d})^{\top}$
where 
\begin{align*}
f_{0}^{u} & =f(H), & f_{i}^{u} & =\int_{0}^{\infty}f(y+H)e^{-\eta_{i}y}dy & \text{for }1\leq i\leq m;\\
f_{0}^{d} & =f(h), & f_{j}^{d} & =\int_{-\infty}^{0}f(y+h)e^{\theta_{j}y}dy & \text{for }1\leq j\leq n.
\end{align*}
\end{prop}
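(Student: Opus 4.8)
The plan is to exhibit the right-hand side as the unique solution of a boundary value problem and then verify, by a martingale argument, that this solution coincides with the expectation on the left. Concretely, I would define a \emph{candidate} $w\colon\mathbb{R}\to\mathbb{R}$ by $w(y)\coloneqq f(y)$ for $y\notin(h,H)$ and $w(y)\coloneqq\boldsymbol{\varpi}(y)\boldsymbol{c}$ for $y\in(h,H)$, where $\boldsymbol{c}\coloneqq\boldsymbol{N}^{-1}\boldsymbol{f}$ (well-defined since $\boldsymbol{N}$ is assumed nonsingular). The exponential building blocks $e^{\beta_{k}(y-H)}$ and $e^{-\gamma_{l}(y-h)}$ appearing in $\boldsymbol{\varpi}$ are chosen because each exponent solves $G(\cdot)=\alpha$; this is exactly the algebraic condition under which $e^{\theta y}$ is annihilated by the (formal) operator $\mathcal{L}-\alpha$, where $\mathcal{L}g(y)\coloneqq\tfrac{\sigma^{2}}{2}g''(y)+\mu g'(y)+\lambda\int[g(y+z)-g(y)]\,\nu(dz)$ is the generator of $X$ and $\nu$ is the hyper-exponential jump law. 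Because the target expectation is linear in $f$, it must take the form $\boldsymbol{\pi}(x)\cdot\boldsymbol{f}$ for some $f$-independent row vector $\boldsymbol{\pi}(x)$; the content of the proposition is the identification $\boldsymbol{\pi}(x)=\boldsymbol{\varpi}(x)\boldsymbol{N}^{-1}$.

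Next I would show that the linear system $\boldsymbol{N}\boldsymbol{c}=\boldsymbol{f}$ is precisely the collection of $m+n+2$ matching conditions forcing $(\mathcal{L}-\alpha)w=0$ throughout $(h,H)$. Evaluating $\boldsymbol{\varpi}$ at the barriers shows that rows $1$ and $m+2$ of $\boldsymbol{N}\boldsymbol{c}=\boldsymbol{f}$ read $w(H)=f(H)$ and $w(h)=f(h)$, the \emph{value-matching} conditions at the two barriers. The remaining rows encode Laplace-transform matching across each barrier: row $i+1$ asserts that the formal exponential extension $w_{\mathrm{ext}}$ of $w|_{(h,H)}$ past $H$ has the same $\eta_{i}$-transform $\int_{0}^{\infty}(\cdot)(H+z)e^{-\eta_{i}z}\,dz$ as $f$ (yielding the entries $1/(\eta_{i}-\beta_{k})$ and $\bar{x}^{\gamma_{l}}/(\eta_{i}+\gamma_{l})$), and symmetrically for the $\theta_{j}$-transforms below $h$. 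The reason these are the correct conditions is the \emph{memorylessness} of hyper-exponential jumps: when $X$ leaves $(h,H)$ by an upward jump generated by the $\eta_{i}$-component, the overshoot $X_{\tau}-H$ is $\mathrm{Exp}(\eta_{i})$-distributed independently of the past, so splitting the jump integral in $\mathcal{L}w$ at the barrier produces exactly an $e^{-\eta_{i}(H-y)}$ multiple of the discrepancy $f_{i}^{u}-\int_{0}^{\infty}w_{\mathrm{ext}}(H+z)e^{-\eta_{i}z}\,dz$, which the matrix equation annihilates.

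With $(\mathcal{L}-\alpha)w=0$ on $(h,H)$ and $w=f$ off $(h,H)$ established, I would apply Dynkin's formula to $M_{t}\coloneqq e^{-\alpha(t\wedge\tau)}w(X_{t\wedge\tau}^{x})$. Since $X_{t}\in(h,H)$ for $t<\tau$, the drift of $M$ vanishes there and $M$ is a local martingale; at the exit time $\tau$ we have $X_{\tau}\notin(h,H)$, where $w(X_{\tau})=f(X_{\tau})$. It then remains to pass to the limit $t\to\infty$ in $\mathbb{E}^{x}[M_{t}]=w(x)$, i.e.\ to verify uniform integrability of $(M_{t})$. Here the hypotheses do the work: the integrability assumptions on the transforms of $f$ bound the overshoot contributions, while the choice $\alpha\geq\mathbb{M}(G)$ (guaranteeing that the $\beta_{k},-\gamma_{l}$ are genuine real roots and that the exponential representation is valid) yields $M_{t}\to e^{-\alpha\tau}f(X_{\tau}^{x})\mathbf{1}_{\{\tau<\infty\}}$ in $L^{1}$. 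Optional stopping then gives $w(x)=\mathbb{E}^{x}[e^{-\alpha\tau}f(X_{\tau}^{x})]$, which is the claimed identity.

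I expect the main obstacle to be the computation in the second step: carefully splitting each jump integral in $\mathcal{L}w$ at the barriers and showing that, modulo the divergent single-exponential integrals attached to the large roots $\beta_{k}>\eta_{1}$ and $\gamma_{l}>\theta_{1}$, the diffusion part (killed by $G(\beta_{k})=G(-\gamma_{l})=\alpha$) and the overshoot corrections cancel exactly when $\boldsymbol{N}\boldsymbol{c}=\boldsymbol{f}$. Managing these formally divergent terms --- by grouping them so that only convergent combinations appear, or via an analytic-continuation argument --- is the delicate point; the uniform-integrability verification in the third step is a secondary technical hurdle. Alternatively, since the statement is a reformulation of \cite{cai2009pricing}, one may simply invoke that result and observe that the passage to a general admissible $f$ is immediate from linearity in $f$ together with the explicit building blocks $f_{i}^{u}$ and $f_{j}^{d}$.
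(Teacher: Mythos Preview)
The paper does not prove this proposition at all: it is stated in the appendix as a quotation of \cite{cai2009pricing}, introduced with ``A trivial modification of the authors' result is repeated below,'' and no argument is supplied. Your final sentence---simply invoking that reference and noting that the extension to a general admissible $f$ is linear in the data $(f_{i}^{u},f_{j}^{d})$---is therefore already \emph{more} than the paper provides, and is the honest way to dispatch the statement in this context.

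That said, the substantive sketch you give is the standard verification argument and is along the lines of what the original reference carries out: build a candidate from the $m+n+2$ exponentials $e^{\beta_{k}y}$, $e^{-\gamma_{l}y}$ annihilated by $\mathcal{L}-\alpha$, impose value matching at $h$ and $H$ plus the $m+n$ exponential-moment (overshoot) conditions to obtain $\boldsymbol{N}\boldsymbol{c}=\boldsymbol{f}$, and conclude by optional stopping. Two cautions if you intend to flesh this out rather than cite. First, the ``formally divergent'' integrals you flag are real: for $\beta_{k}>\eta_{i}$ the integral $\int_{0}^{\infty}e^{\beta_{k}z}e^{-\eta_{i}z}\,dz$ does not converge, so one cannot literally split the jump integral in $\mathcal{L}w$ at the barrier and match transforms term by term; the clean way around this is to verify $(\mathcal{L}-\alpha)w=0$ directly on $(h,H)$ using $G(\beta_{k})=\alpha$ (no splitting needed), and to interpret the non-trivial rows of $\boldsymbol{N}\boldsymbol{c}=\boldsymbol{f}$ not as ``$(\mathcal{L}-\alpha)w=0$ at the barrier'' but as the conditional-overshoot identities $\mathbb{E}[f(X_{\tau})\mid\text{exit via an }\eta_{i}\text{-jump up}]=\eta_{i}f_{i}^{u}$, which is where memorylessness actually enters. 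Second, in the application one has $\alpha=r-\gamma\leq0$, so $e^{-\alpha t}$ is nondecreasing and uniform integrability of $M_{t}$ is not automatic; this is exactly why the paper separately invokes \cite[Theorem~3.1]{cai2014valuation} (Proposition~\ref{lem:cai_theorem3.1} here) to control $\mathbb{E}[\sup_{t}e^{-\alpha t}f(\tilde{X}_{t}^{x})]$ before appealing to the present result.
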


\bibliographystyle{plain}
\bibliography{stock_loans_with_liquidation}

\end{document}